\pdfminorversion=7
\documentclass[11pt]{scrartcl} 


\usepackage[utf8]{inputenc}

\usepackage{graphicx}

\usepackage{tabularx}
\usepackage{multirow}

\usepackage{enumerate}

\usepackage{amsmath}
\usepackage{amssymb}
\usepackage{amsfonts}
\usepackage{amsthm}

\usepackage{epstopdf}
\usepackage{float}
\usepackage[caption=false]{subfig}


\usepackage{xcolor}

\newcommand{\calm}{\ensuremath{\mathcal{M}}}
\newcommand{\calf}{\ensuremath{\mathcal{F}}}
\newcommand{\rea}{\ensuremath{\mathbb{R}}}

\DeclareMathOperator{\adj}{adj}
\DeclareMathOperator{\col}{col}

\newtheorem{rem}{Remark}
\newtheorem{prop}{Proposition}

\begin{document}

\title{On-line Identification of Photovoltaic Arrays' Dynamic Model Parameters}

\author{A. Bobtsov$^1$, F. Mancilla-David$^2$, S. Aranovskiy$^3$ and R. Ortega$^4$} 

\date{\footnotesize $^1$Department of Control Systems and Robotics, ITMO University, Kronverkskiy av. 49, Saint Petersburg, 197101, Russia\\ \texttt{bobtsov@mail.ru}\\
	$^2$University of Colorado Denver, Denver, Colorado 80204, USA\\ \texttt{fernando.mancilla-david@ucdenver.edu}\\
	$^3$IETR--CentaleSup\'elec, 35576 Cesson-S\'evign\'e, France\\\texttt{stanislav.aranovskiy@centralesupelec.fr}\\
	$^4$Departamento Acad\'{e}mico de Sistemas Digitales, ITAM, M\'{e}xico\\ \texttt{romeo.ortega@itam.mx}\\ \vspace{0.5cm}
}

\maketitle

\begin{abstract} 
	This paper deals with the problem of on-line identification of the parameters of a realistic dynamical model of a photovoltaic array connected to a power system through a power converter. It has been shown in the literature that, when interacting with switching devices, this model is able to better account for the PV array operation, as compared to the classical five parameter static model of the array. While there are many results of identification of the parameters of the latter model, to the best of our knowledge, no one has provided a solution for the aforementioned more complex dynamic model since it concerns the parameter estimation of a nonlinear, underexcited system with unmeasurable state variables. Achieving such objective is the main contribution of the paper. We propose a new parameterisation  of the dynamic model, which, combined with the powerful identification technique of dynamic regressor extension and mixing, ensures a fast and accurate online estimation of the unknown parameters.  Realistic numerical examples via computer simulations are presented to assess the performance of the proposed approach---even being able to track the parameter variations when the system changes operating point.
\end{abstract}

\textbf{keywords:} {Photovoltaic arrays, on-line parameter estimation, identification of nonlinear systems.}

\section{Introduction}\label{sec1}
Photovoltaic (PV) arrays are quickly becoming an important source of electric power around the world. According to the latest solar industry update reported by the US National Renewable Energy Laboratory \cite{NREL}, 172 GW of PV capacity was added globally in 2021, bringing cumulative capacity to 939 GW. They also report that, last year, 5\% of global electricity generation came from PV. Analysts project continued increases in annual global PV installations for the upcoming years, with estimates suggesting PV could cover a quarter of global electricity needs by mid-century. Models able to properly capture the performance of PV arrays for both planning and operation purposes are thus critical for efficient use of this technology. 

The static behavior of a PV array may be captured via a nonlinear current--voltage ($I$--$V$) characteristic. The single--diode model (SDM) is able to adequately fit the $I$--$V$ {\em static curve}, and is thus widely adopted to represent the performance of a PV array \cite{Masters}. This model makes use of five parameters whose values depend on the solar irradiance ($G$) and the temperature at the PV junction ($T$). Most of the research has focused on developing functional forms seeking to capture $G$ and $T$ dependencies, either considering the physical characteristics of a PV cell \cite{DeSoto2006,boyd2011,tian2012} or using functional approximations from discrete measured points of the $I$--$V$ curve \cite{dobos2019}. A very large literature is availble for the estimation of the parameters of this static curve, see \cite{JOR} for a recent tutorial containing 164 references. Functional forms have been so far {\em unable} to fully capture the behavior of the SDM five parameters over a wide range of operating conditions. To overcome this defficiency we follow \cite {Dynamic2020} and consider a more realistic {\em dynamical model} of the PV array, which is capable of characterizing a PV array under any operating point, removing the need for such functional forms. We take advantage of the fact that in most application PV arrays are interfaced to a power system through a power converter. Because of the switching, a power converter imposes a current (or voltage) with ripple to the PV array. Under these circumstances, the PV array's parasitic capacitance \cite{Krein2013} plays a role on the synthesis of $I$--$V$ characterization, replacing  the static nonlinear curve by a {\em dynamic nonlinear system} with ``orbits" around static operating points---see \cite{Dynamic2016,Dynamic2020} for a more detailed discussion on this matter. In summary, we confront in this paper the task of parameter estimation of a nonlinear, underexcited system with unmeasurable state variables. Providing a solution to this challenging practically relevant problem is the main contribution of the paper.

The remainder of the paper is organized as follows. In Section \ref{sec2} we present the model of the system and formulate the parameter identification problem. The key step in the design is the development of a linear regression equation (LRE) for the system, which is carried out in Section \ref{sec3}. The main result of the paper is given in Section \ref{sec4}. Simulation results of some examples reported in the literature are given in Section \ref{sec5} to illustrate the excellent  {performance} of the proposed estimator.  The paper is wrapped-up with concluding remarks in  Section \ref{sec6}.
%

\section{New Dynamic Model and Formulation of the Parameter Identification Problem} \label{sec2}
In this section we give the new dynamic model used to describe the behavior of the system and formulate the parameter identification problem solved in the paper.

\begin{figure}[ht]
	\centering
	\includegraphics[width=0.8\linewidth]{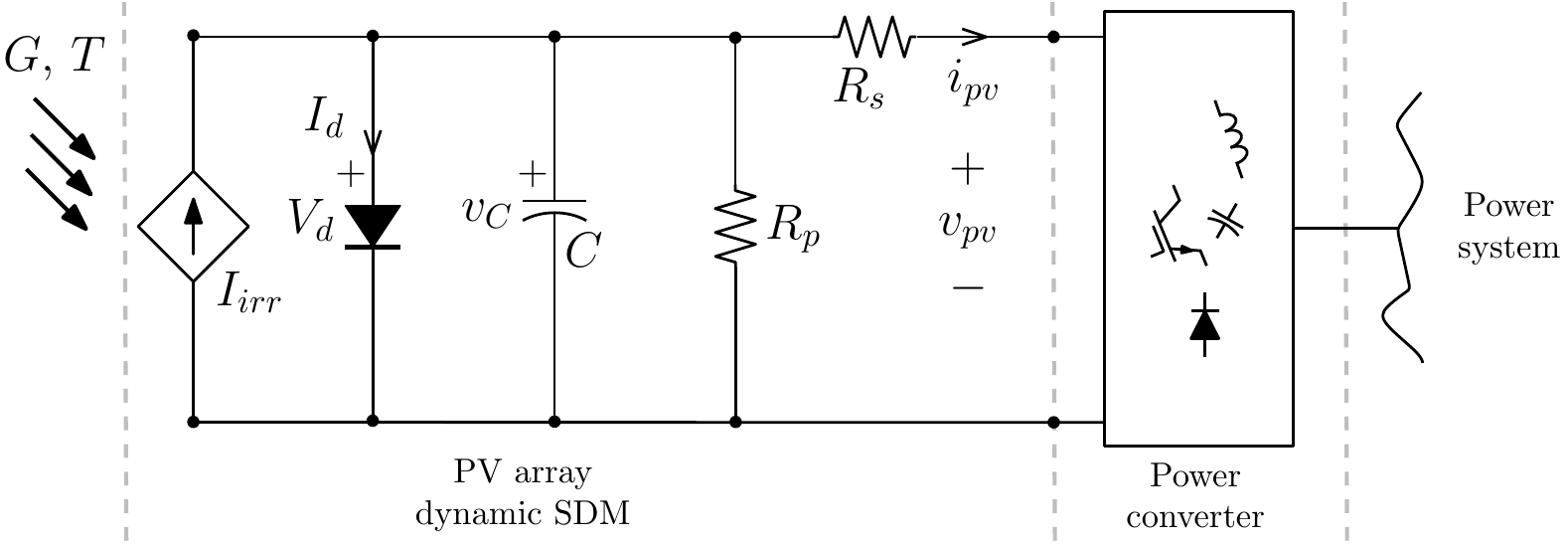}
	\caption{Schematic of PV array connected to a power system through a power converter.}\label{fig:PowerStage}
\end{figure} 

Fig. \ref{fig:PowerStage} shows the circuital representation of the SDM studied in this paper. In the schematic, $v_\mathrm{pv}$ and $i_\mathrm{pv}$ represent, respectively, the voltage and current at the PV array's physical terminals. As suggested in the figure, the SDM includes the photoelectric current or irradiance current generated when the cell is exposed to sunlight, $I_{irr}$, and three parasitic elements, $R_\mathrm{p}$, $R_\mathrm{s}$ and $C$, representing, respectively, the leakage resistance, ohmic losses and the parasitic capacitance, whose dynamics must be added to the model. 

The relation between $I_d$ at $V_d$ at the PV junction is given by the well-known Shockley diode equation,
$$
I_d=I_0\left(e^{b V_d}-1\right),
$$
where $I_0$ is the diode saturation current, and $b$ is the reciprocal of the modified ideality factor, defined as,
$$
b=\frac{q}{nkT},
$$
in terms of the electron's electric charge, $q = 1.602  \times 10^{-19}$ C, the Boltzmann constant, $k = 1.3806503 \times 10^{-23}$ J/K, and the ideality factor of the diode, $n$. The dynamic SDM has thus six characterization parameters, namely $n$, $R_{p},R_s,C,I_{0}$ and $I_{irr}$. However, the nonideality factor $n$ is usually assumed to be constant (independent of $G$ and $T$ variations), and may be obtained from manufacturer's datasheets \cite{tian2012}. Furthermore, sensing $T$ is usually available and changes very slowly compared to electrical dynamics \cite{carrasco2014}, so it can be assumed the parameter $b$ is {\em known and constant}. That leaves five parameters to be identified: $R_{p},R_s,C,I_{0}$ and $I_{irr}$.

It is noted the circuit of Fig. \ref{fig:PowerStage} is able to accommodate a PV array of an arbitrary number of PV cells connected in series/paralell by properly redefining the various characterization parameters \cite{angulo2017}. Fig. \ref{fig:PowerStage} also suggests the PV array is interface to a power system by a generic power converter. In practice, that converter will be either a boost or buck converter topology. In the former case, the PV array would be connected to a series inductor followed by high-frequency switching devices, forcing $i_\mathrm{pv}$ to have a dc value (power component) along with some small ripple (noise component). A buck converter would require a capacitor to be connected in parallel with the PV array, forcing $v_\mathrm{pv}$ to have a dc value along with some small ripple \cite{Mohan2012}. Without loss of generality, we consider a boost--type topological realization.

Noticing from Fig. \ref{fig:PowerStage} that $V_d=v_C$, the state space model for the system considered in the paper is given by the following nonlinear differential-algebraic equations
\begin{subequations}\label{xsys}
	\begin{align}
		\label{x1}
		C\dot v_C & = - {1\over R_\mathrm{p}}v_C - i_\mathrm{pv} + I_{irr} - I_0\left(e^{b v_C}-1\right)\\
		\label{x2}
		v_\mathrm{pv}  & = v_C - R_s i_\mathrm{pv},
	\end{align}
\end{subequations}
with the following considerations:
\begin{enumerate}
	\item[(i)]  $v_\mathrm{pv}(t)\in \rea$ is a positive {\em measurable} signal;
	\item[(ii)] $v_C(t)\in \rea$ is a positive {\em unknown} state variable;
	\item[(iii)] $i_\mathrm{pv}(t)\in \rea$ is a {\em known} signal whose first and second derivative are also {\em known};
	\item[(iv)] $C,R_{p},R_s,I_{0}$ and $I_{irr}$ are positive {\em unknown parameters};
	\item[(v)] $b$ is a {\em known} positive parameter.
\end{enumerate}

\noindent\textbf{Problem formulation.}  Consider the system   \eqref{xsys} verifying the conditions (i)-(v) above.  From the unique measurements of $v_\mathrm{pv}$, $i_\mathrm{pv}$, ${d \over dt}{i}_\mathrm{pv}$ and  ${d^2 \over dt^2} i_\mathrm{pv}$  generate {\em globally convergent}  on-line estimates of the parameters $R_{p},R_s,C,I_{0}$ and $I_{irr}$. \\

All assumptions (i)-(v) are standard and practically reasonable, except perhaps the assumption of knowledge of  ${d \over dt}{i}_\mathrm{pv}$ and  ${d^2 \over dt^2} i_\mathrm{pv}$. However, we follow here the reasoning of \cite[Section 6]{Dynamic2020} regarding some practical considerations pertaining to the shape of $i_\mathrm{pv}$. Namely, that it consists of the sum of a known mean value current plus a ripple of known frequency. Hence, we can assume that $i_\mathrm{pv}(t)=A (1+\sin{\omega t})$ with known $A$ and $\omega$.\\

To simplify the reading we rewrite the system   \eqref{xsys} using control theory notation. Towards this end we define the state, input and output signals via $x:=v_C,u:=i_\mathrm{pv}$ and $y:=v_\mathrm{pv}$ and introduce the constant parameter vector $\eta \in \rea^5$ as
$$
\eta:=\col\left({1 \over R_\mathrm{p}C},{I_{0} \over C},{1 \over C}(I_{irr}+I_0),{1 \over C},R_s\right),
$$
where $\col(\cdot)$ denotes a column vector. With the definitions above the system   \eqref{xsys} may be rewritten as
\begin{subequations} \label{sys}
	\begin{align}
		\label{x}
		\dot x & = - \eta_1 x -\eta_2 e^{b x}+\eta_3-\eta_4 u\\
		\label{y}
		y  & = x - \eta_5 u.
	\end{align}
\end{subequations}

\begin{rem}
\label{rem1}
We make the important observation that the system   \eqref{sys} is {\em nonlinear} with {\em unmeasurable} state. The task of estimating the parameters $\eta$ is, clearly, far from obvious and cannot be solved with any of the existing parameter estimation techniques. Therefore, a radically new technique must be developed to provide the solution to the estimation problem.
\end{rem}

\begin{rem}
\label{rem2}
Notice that it is possible to obtain the physical parameters  $C,R_\mathrm{p},R_s,I_{0}$ and $I_{irr}$ from knowledge of $\eta$. More precisely,
there exists a bijective mapping $\calm:\rea_+^5 \mapsto \rea_+^5$ such that
\begin{equation}
\label{phypar}
K:=\col(C,R_p,R_s,I_{0},I_{irr})=\calm(\eta).
\end{equation}
Clearly, the mapping is defined as
$$
\calm(\eta)=\col\left({1 \over \eta_4},{\eta_4 \over \eta_1},\eta_5,{\eta_2 \over \eta_4},{1 \over \eta_4}(\eta_3 - \eta_2)\right).
$$
\end{rem}

\begin{rem}
\label{rem3}
The interested reader is referred to \cite{BOBetal} where the problem of estimating the parameters of the windmill power coefficient, which has the additional difficulty of being nonlinearly parameterized, is solved. Notice that, if the temperature is {\em not measurable}, that is, if $b$ is unknown, we are confronted with a similar extremely difficult nonlinearly parameterized problem---with the additional constraint of unmeasurable state.
\end{rem}

\section{Model Reparameterization}
\label{sec3}
The key step in the estimator design is to derive a {\em linear regression equation} (LRE) for the parameter $\eta$, that will be used to estimate them. This result is given in the proposition below.

\begin{prop}\label{prop:parametrization}\em
	Consider the system \eqref{xsys}  verifying the conditions (i)-(v).
	\begin{enumerate}
		\item[\textbf{C1}] The system admits the LRE
		\begin{equation}
			\label{lre}
			z(t) =  \Omega^\top(t)\theta+\varepsilon_t,
		\end{equation}
	where $ \Omega(t) \in \rea^8$ and $z(t) \in \rea$ are measurable signals, $\theta \in \rea^8$ is a vector of unknown parameters and $\varepsilon_t$ is an exponentially decaying term stemming from the filters initial conditions.\footnote{Following standard practice, and without loss of generality, this term is neglected in the sequel.}
	\item[\textbf{C2}] There exists a mapping $\calf:\rea_+^4 \mapsto \rea_+^4$ such that
	\[
	\col(\eta_1,\eta_3,\eta_4,\eta_5)=\calf(\theta_1,\theta_2,\theta_3, \theta_4).
	\]
	\item[\textbf{C3}] The parameter $\eta_2$ verifies the relation
	\begin{equation}
	\label{eta2}
	\eta_2=w(\eta_1,\eta_3,\eta_4,\eta_5,u,y)
	\end{equation}
	where $w:\rea^6 \to \rea_+$ is a {\em known} mapping.
	\end{enumerate}
\end{prop}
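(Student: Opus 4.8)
The plan is to use the output equation \eqref{y} to remove the unmeasurable state, then to kill the exponential nonlinearity by a single differentiation; that same step also removes $\eta_2$ from the regression, and $\eta_2$ is then reconstructed a posteriori, which is precisely what C3 asks for. First I would solve \eqref{y} for the state, $x=y+\eta_5 u$, so that both $x$ and $\dot x=\dot y+\eta_5\dot u$ are expressed through the measured output, the known input and the single unknown $\eta_5$. Substituting into \eqref{x} and isolating the diode term gives
\begin{equation*}
\eta_2 e^{b x}=\eta_3-\eta_1 x-\eta_4 u-\dot x .
\end{equation*}
Since $e^{bx}>0$ this is solvable for $\eta_2$, which is exactly the relation of C3 with $w(\eta_1,\eta_3,\eta_4,\eta_5,u,y)=e^{-bx}\left(\eta_3-\eta_1 x-\eta_4 u-\dot x\right)$ evaluated at $x=y+\eta_5 u$; positivity of $\eta_2$ transfers to $w$, so $w:\rea^6\to\rea_+$. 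Thus $\eta_2$ need never enter the identification, being recoverable once the other four parameters are in hand.

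To obtain a regression free of the exponential (establishing C1) I would differentiate the displayed relation and substitute $\eta_2 e^{bx}$ back from it, using $\tfrac{d}{dt}e^{bx}=b\dot x\,e^{bx}$. This cancels $\eta_2$ and leaves the exponential-free, parameter-polynomial identity
\begin{equation*}
\ddot x+\eta_1\dot x+\eta_4\dot u=b\,\dot x\left(\dot x+\eta_1 x+\eta_4 u-\eta_3\right).
\end{equation*}
Replacing $x=y+\eta_5 u$, $\dot x=\dot y+\eta_5\dot u$, $\ddot x=\ddot y+\eta_5\ddot u$ and expanding, every term becomes a product of $y,u$ and their derivatives weighted by a constant built from $\eta_1,\eta_3,\eta_4,\eta_5$ and their pairwise products. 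Collecting these weights yields the composite parameter vector; the monomials $\ddot u$ and $\dot y\dot u$ both carry coefficients proportional to $\eta_5$ and may be merged, so exactly eight independent regressor signals survive, giving $\theta\in\rea^8$. Passing the whole identity through a stable LTI filter $1/(p+\lambda)^k$ (with $p:=d/dt$ and $k$ large enough) is intended to render the needed derivatives and products computable from the available signals and to produce the measurable $z(t)$ and $\Omega(t)$, the filter initial conditions supplying the exponentially decaying $\varepsilon_t$.

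For C2 I would take the first four entries of $\theta$ to be $\eta_5$, $\eta_1+b\eta_3$, $\eta_5(\eta_1+b\eta_3)+\eta_4$ and $b\eta_1$. These invert triangularly, namely $\eta_5=\theta_1$, $\eta_1=\theta_4/b$, $\eta_3=(\theta_2-\theta_4/b)/b$ and $\eta_4=\theta_3-\theta_1\theta_2$, which exhibits the mapping $\calf$ explicitly and, using $b>0$ together with the positivity of the physical parameters, shows it carries the admissible positive orthant into $\rea_+^4$.

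The main obstacle is the realizability step inside C1. Differentiating to remove the exponential unavoidably generates terms that are nonlinear in the derivatives of the output---most conspicuously $\dot y^{2}$, which enters with the known coefficient $-b$ and therefore belongs in $z$---yet $\dot y^{2}$ cannot be written as a stable LTI filter acting on products of the genuinely measured signals $y,u,\dot u,\ddot u$ (unlike $\ddot y$, $y\dot y$, $u\dot y$ or $\dot y\dot u$, which reduce to filters of $y^2$, $uy$ and the like). Making $z$ and $\Omega$ truly measurable---whether by reconstructing $\dot y$ with a dedicated differentiating filter and absorbing its persistent error, or by a further algebraic rearrangement that confines all output derivatives to realizable combinations such as $p(y^2)$ and $p(uy)$---is where the real work lies; a secondary but necessary check is that the eight composite coefficients are mutually consistent and that the selected four-dimensional sub-block is genuinely invertible on $\rea_+^4$.
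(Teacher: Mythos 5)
Your route---differentiate the raw state equation and substitute the diode term back into itself---hits a genuine and, as you set it up, fatal obstacle, which you yourself identify: the identity $\ddot x+\eta_1\dot x+\eta_4\dot u=b\,\dot x\left(\dot x+\eta_1 x+\eta_4 u-\eta_3\right)$ contains $\dot x^2$, hence $\dot y^{2}$, and a product of two copies of $\dot y$ cannot be realized by stable filtering of measured signals (the swapping lemma needs one factor whose derivative is known, and $\ddot y$ is not available). The missing idea, which is the paper's key trick, is to multiply the equation $\dot y=-\eta_5\dot u-\eta_1 y-\eta_7 u-\eta_2 e^{by}e^{\eta_6 u}+\eta_3$ by $e^{-by}$ \emph{before} differentiating, and to introduce the new measurable ``output'' $\psi:=-\tfrac{1}{b}e^{-by}$, so that $e^{-by}\dot y=\dot\psi$ is an exact derivative of a measured signal and all output nonlinearity is absorbed into static measurable functions ($e^{-by}$, $ye^{-by}$, $ue^{-by}$, $\dot u e^{-by}$). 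After this change of variables the exponential that remains is $e^{\eta_6 u}$, whose argument involves only $u$; differentiating then brings down $\dot u$ (not $\dot y$), and the single cross term $\dot u\,\dot\psi$ \emph{is} realizable via the swapping lemma precisely because $\ddot u$ is assumed known (assumption (iii)). This is what makes $z$ and $\Omega$ measurable and yields the paper's $\theta=\col(\mu,\eta_5\mu)\in\rea^8$; your proposed merging of monomials into eight coefficients is never validated because the regression it would come from is not realizable.

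Your C3 has the same defect in miniature: the map $w=e^{-bx}\left(\eta_3-\eta_1x-\eta_4u-\dot x\right)$ with $x=y+\eta_5 u$ involves $\dot x=\dot y+\eta_5\dot u$, i.e.\ the unmeasurable $\dot y$, whereas the point of C3 (cf.\ Remark \ref{rem4}) is that $\hat\eta_2$ be computable from measured signals. The paper fixes this by applying the proper filter $\tfrac{\lambda}{p+\lambda}$ to the state equation, so the derivative appears only as $\tfrac{\lambda p}{p+\lambda}(y+\eta_5 u)$, which is realizable; your algebraic relation is the unfiltered, unimplementable version of this. Your C2 is self-consistent as linear algebra, but it is contingent on the coefficient assignment of a regression you could not realize, so it cannot stand on its own. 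In short: correct identification of where the difficulty lies, but the essential idea ($\psi=-\tfrac{1}{b}e^{-by}$ plus the swapping lemma) is absent, and without it claims C1 and C3 are not established.
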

\begin{proof}
	We will establish the proof considering the representation \eqref{sys} of the system. Differentiating \eqref{y} with respect to time and using \eqref{x} we get
	\[
	\begin{aligned}
		\dot y & =  - \eta_5\dot u+\dot x\\
			& =  - \eta_5\dot u- \eta_1 x -\eta_2 e^{b x}+\eta_3 - \eta_4 u\\
			& =  - \eta_5\dot u- \eta_1 (y +\eta_5 u) -\eta_2 e^{b (y +\eta_5 u)}+\eta_3-\eta_4 u\\
			& =  - \eta_5\dot u-\eta_1 y -\eta_7 u -\eta_2 e^{b y}e^{\eta_6 u}+\eta_3
	\end{aligned}
	\]
	where we defined the positive constants
	$$
	\eta_6:=b \eta_5,\;\eta_7:=\eta_4+\eta_1 \eta_5.
	$$
	Multiplying by $ e^{- b y}$ in both sides of the latter equation we get
	\begin{equation}	\label{eby}
		\begin{aligned}
			e^{- b y}\dot y = &- \eta_5\dot u e^{- b y}-\eta_1 y e^{- b y} -\eta_7 u  e^{- b y} \\ 
				&-\eta_2 e^{\eta_6 u}+\eta_3 e^{- b y}.
		\end{aligned}
	\end{equation}
	Defining now the measurable signal
	\begin{equation} \label{eq:psi}
		\psi:=-{1 \over b} e^{- b y},
	\end{equation}
	we can write \eqref{eby} as
	\begin{equation}
	\label{dotpsi}
	\dot \psi=e^{- b y}\dot y =\phi^\top \mu - \eta_2 e^{\eta_6 u},
	\end{equation}
	where we defined
	\begin{align}
		\mu&:=\col( \eta_5,\eta_1,\eta_7,\eta_3) \nonumber \\
		\phi&:=\col(-\dot u e^{- b y},- y e^{- b y}, -u  e^{- b y},e^{- b y}). \label{eq:phi}
	\end{align}
	Differentiating \eqref{dotpsi} we get
	\[
		\ddot \psi =\dot \phi^\top \mu -\eta_6 \dot u \left(\eta_2 e^{\eta_6 u}\right) =\dot \phi^\top \mu - \eta_6 \dot u\left( \phi^\top \mu-\dot \psi\right).
	\]
	Now, apply to the latter equation an LTI filter of the form ${\lambda^2\over (p+\lambda)^2}$, where $p:={d \over dt}$ and $\lambda>0$ is a designer chosen constant, to get
	\[
		\begin{aligned}
			{\lambda^2 p^2\over (p+\lambda)^2}\psi = \mu^\top {\lambda^2 p\over (p+\lambda)^2} \phi &- \eta_6 \mu^\top {\lambda^2 \over (p+\lambda)^2}( \phi \dot u) \\
				&+ \eta_6 {\lambda^2 \over (p+\lambda)^2} (\dot u \dot \psi).
		\end{aligned}
	\]
	Recalling the definitions of $\eta_6$ and $\mu$, it follows that $\eta_6 = b\mathbf{e}^\top_{1}\mu,$
	where $\mathbf{e}_{1}:= \col(1,0,0,0)$. Then the equation above may be written in the LRE form \eqref{lre} with the definitions
	\begin{equation} \label{eq:parametrisation}
		\begin{gathered}
			z:=\frac{\lambda^2 p^2}{(p+\lambda)^2}\psi,
			\ 
			\theta = \begin{bmatrix}
				\mu \\
				\eta_5\mu
			\end{bmatrix},
			\\
			\Omega:=\begin{bmatrix}
				{\lambda^2 p\over (p+\lambda)^2} \phi  + \frac{\lambda^2}{(p+\lambda)^2} (\dot u \dot \psi)b\mathbf{e}_{1}\vspace{3pt} \\
				-b\frac{\lambda^2}{(p+\lambda)^2}(\phi \dot u)
			\end{bmatrix}.
		\end{gathered}
	\end{equation}
	Since $z$ is clearly measurable, to complete the proof of the claim \textbf{C1} it only remains to prove that the regressor vector $\Omega$ is also measurable. From inspection of $\Omega$ we see that the only conflicting term is ${\lambda^2 \over (p+\lambda)^2} (\dot u \dot \psi)$, which involves the unmeasurable signal $\dot \psi$. To prove that this signal is computable without differentiation we invoke the swapping lemma \cite[Lemma 6.3.5]{SASBODbook} that ensures the following identity
	\[
	\frac{\lambda}{p+\lambda}(\dot u \dot \psi)  = \dot u \frac{\lambda p}{p+\lambda} \psi
	- \frac{1}{p+\lambda}\left(\ddot u \frac{\lambda p}{p + \lambda}\psi \right).
	\]
	Using this identity we obtain
	$$
	{\lambda^2 \over (p+\lambda)^2} (\dot u \dot \psi)={\lambda \over p+\lambda}\left(\dot u \frac{\lambda p}{p+\lambda} \psi
	- \frac{1}{p+\lambda}\left(\ddot u \frac{\lambda p}{p + \lambda}\psi \right)\right),
	$$
	where, given the knowledge of $\ddot u$, the right hand term is computable without differentiation.
	
	We now proceed to prove  the claim \textbf{C3}. For, we apply the filter ${\lambda\over p+\lambda}$ to the state equation \eqref{x} to get
	\[
		\begin{aligned}
			{\lambda\over p+\lambda} \dot x = &-{\lambda\over p+\lambda}[\eta_1(y+\eta_5u)] -\eta_2\left\{{\lambda\over p+\lambda} e^{b(y+\eta_5u)}\right\}
			\\ &+{\lambda\over p+\lambda}(\eta_3-\eta_4 u) = {\lambda p\over p+\lambda}(y+\eta_5u),
		\end{aligned}
	\]
	where we used \eqref{y} to get the second identity. Noticing that the term in brackets is bounded away from zero we can rewrite the equation above as
	\[
		\begin{aligned}
			\eta_2 =&\left({\lambda\over p+\lambda}e^{b(y+\eta_5u)}\right)^{-1}\\
			&\left({\lambda\over p+\lambda}(\eta_3-\eta_4 u - \eta_1(y+\eta_5u))-{\lambda p\over p+\lambda}(y+\eta_5u)\right)\\ 
			& =: w(\eta_1,\eta_3,\eta_4,\eta_5,u,y).
		\end{aligned}
	\]
	
	Finally, the mapping $\calf(\theta_1,\theta_2,\theta_3,\theta_4)$ is easily derived as
	\begin{equation}
	\label{calf}
	\begin{bmatrix} \eta_1 \\ \eta_3 \\ \eta_4 \\ \eta_5 \end{bmatrix}
	= \calf(\theta_1,\theta_2,\theta_3, \theta_4)
	= \begin{bmatrix} \theta_2 \\ \theta_4 \\ \theta_3 - \theta_1\theta_2 \\ \theta_1 \end{bmatrix},
	\end{equation}
	establishing claim \textbf{C2}.
\end{proof}

\begin{rem}
\label{rem4}
The computation of the estimate of $\eta_2$ is done replacing the estimates of $\eta_1,\eta_3,\eta_4$ and $\eta_5$ in the mapping $w$, that is
$$
\hat \eta_2=w(\hat \eta_1,\hat \eta_3,\hat \eta_4,\hat \eta_5,u,y).
$$
For this reason it is important not to ``pull out" the parameter $\eta_1$ from the action of the filter.
\end{rem}
%

\section{Parameter Estimator and Main Result} \label{sec4}

For parameter estimation, we apply the DREM procedure, see \cite{ARAetaltac,ORTetaltac21}. As it is shown in Proposition~\ref{prop:parametrization}, among eight elements of the unknown vector $\theta$, only the first four are required, see \textbf{C2} and \textbf{C3}. The DREM procedure transforms the vector LRE \eqref{lre} into a set of scalar LREs for each of the unknown parameters allowing the estimation of $\theta_{i}$ for $i=1,2,3,4$, only.

For the DREM procedure, we introduce the dynamics extension
\begin{equation} \label{eq:kreff}
	\begin{aligned}
		\dot{\Phi}_f &= -a \Phi_f + a \Omega \Omega^\top, \Phi_f(0) = 0\\
		\dot{Y}_f &= -a Y_f + a \Omega z,  Y_f(0) = 0\\
		\Phi &= c\Phi_f + d\Omega\Omega^\top &&\\
		Y &= cY_f + d\Omega z,&&
	\end{aligned}
\end{equation}
where $z$ and $\Omega$ are defined in \eqref{eq:parametrisation}, $a>0$, $c>0$ and $d>0$ are the tuning coefficients. The dynamic extension \eqref{eq:kreff} is the combination of Kreisselmeiers regressor extension scheme that preserves the excitation of the regressor $\Omega$, see \cite{ARAetaltac22}, and the feedforward term $d\Omega\Omega^\top$ enhancing the excitation, see \cite{ORTetaltac21}.

After the dynamics extension, the LRE
\begin{equation}
\label{kre}
Y(t) = \Phi(t)\theta 
\end{equation}
holds. Following the DREM procedure, we define $\adj\left(\Phi\right)$ to be the adjugate matrix of $\Phi$, and define the signals
\begin{equation}
\label{delcaly}
\Delta := \det\left(\Phi\right),\;\mathcal{Y}:=\adj\left(\Phi\right)Y.
\end{equation} 
Then, upon multiplication of \eqref{kre} by $\adj\left(\Phi\right)$, we obtain the element-wise LREs
\[
\mathcal{Y}_i(t) = \Delta(t)\theta_{i},\;i=1,\ldots,8.
\]
Then, the required unknown parameters are estimated with the classical gradient scheme
\begin{equation} \label{eq:drem:grad}
	\dot{\hat{\theta}}_{i} = -\gamma_i \Delta\left(\mathcal{Y}_i - \Delta\hat\theta_{i}\right),\;i=1,2,3,4
\end{equation}
where $\gamma_i>0$ are tuning coefficients. Note that, thanks to the use of DREM, we can estimate only the required parameters and not the whole vector $\theta$. It is straightforward to show that the estimation error $\tilde{\theta}_i:=\hat{\theta}_i - \theta_i$ obeys
\[
\dot{\tilde{\theta}}_{i} = -\gamma_i \Delta^2\tilde{\theta}_i.
\]
Hence, solving these scalar equations yields
\[
\tilde{\theta}_i(t) = e^{-\gamma_i \int_0^t \Delta^2(s)ds}\tilde{\theta}_i(0).
\]
Then, the convergence follows under the condition that the integral $\int_0^t \Delta^2(s)ds$ tends to infinity that is strictly weaker the the classical condition of persistency of excitation of the regressor $\Omega$, see \cite{ORTetaltac21}.

The calculations above provide the proof of our main result given in the following proposition.

\begin{prop}\label{pro2}\em
	Consider the system \eqref{xsys}  verifying the conditions (i)-(v). Construct the signals $z$ and $\Omega$ as defined in \eqref{eq:parametrisation} with the signals $\psi$ and $\phi$ defined in \eqref{eq:psi} and \eqref{eq:phi}, respectively. Define the DREM parameter estimator \eqref{eq:kreff}, \eqref{delcaly} and \eqref{eq:drem:grad}. Use the estimate $\hat{\theta}_i,\;i=1,\dots,4$, to reconstruct the estimates $\hat \eta$ following \textbf{C2} and \textbf{C3} of Proposition~\ref{prop:parametrization}. Then, compute the estimate of the physical parameters $\hat K:=\col(\hat C,\hat R_p,\hat R_s,\hat I_{0},\hat I_{irr})$ as indicated by \eqref{phypar}. Assume the signal $\Delta$ is not square integrable. Then, 
	$$
	\liminf_{t \to \infty} \big|\hat K(t)-K|=0,
	$$
	for all values of the system and estimator initial conditions. 
\end{prop}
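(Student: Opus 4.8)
The plan is to reduce everything to the scalar error dynamics already assembled just before the statement, and then to propagate the convergence of $\hat\theta$ through the two reconstruction maps of Proposition~\ref{prop:parametrization} and the bijection $\calm$ of Remark~\ref{rem2}. First I would verify that the Kreisselmeier-type extension \eqref{eq:kreff} preserves the regression, i.e. that \eqref{kre} holds with the \emph{true} $\theta$. Substituting the LRE $z=\Omega^\top\theta$ from \textbf{C1} into the filter equations, the pair $(\Phi_f,Y_f)$ and the pair $(\Phi_f,\Phi_f\theta)$ obey the same linear ODE with identical (zero) initial conditions, so $Y_f=\Phi_f\theta$; combining with the feedforward terms yields $Y=c\Phi_f\theta+d\Omega\Omega^\top\theta=\Phi\theta$. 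Left-multiplying by $\adj(\Phi)$ and using $\adj(\Phi)\Phi=\Delta\,I$ then gives the decoupled scalar LREs $\mathcal{Y}_i=\Delta\theta_i$.

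Second, I would analyse the gradient law \eqref{eq:drem:grad}. The error $\tilde\theta_i$ satisfies $\dot{\tilde\theta}_i=-\gamma_i\Delta^2\tilde\theta_i$, whose closed-form solution is $\tilde\theta_i(t)=\exp\big(-\gamma_i\int_0^t\Delta^2(s)\,ds\big)\tilde\theta_i(0)$. Under the standing hypothesis that $\Delta\notin L_2$ the integral diverges, so $\tilde\theta_i(t)\to0$ for $i=1,\dots,4$ and for every initial condition $\tilde\theta_i(0)$. This is the only place the excitation hypothesis enters, and since the exponent is monotone the convergence of the four estimated components $\hat\theta_1,\dots,\hat\theta_4$ is genuine, not merely subsequential.

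Third, I would transport this convergence to the physical parameters. By \textbf{C2} the map $\calf$ in \eqref{calf} is a polynomial, hence continuous, so $\hat\eta_1,\hat\eta_3,\hat\eta_4,\hat\eta_5$ converge to their true values. The estimate of the remaining parameter is formed, per Remark~\ref{rem4}, by evaluating the mapping $w$ of \textbf{C3} at the running estimates, $\hat\eta_2=w(\hat\eta_1,\hat\eta_3,\hat\eta_4,\hat\eta_5,u,y)$. Propagating convergence through $w$ is the delicate point: $\eta_5$ enters \emph{inside} the stable filters $\tfrac{\lambda}{p+\lambda}$ and inside the exponential $e^{b(y+\eta_5u)}$, so one cannot simply substitute a constant; instead one argues that replacing the constant $\eta_5$ by the convergent signal $\hat\eta_5(t)$ perturbs each filtered quantity by a term driven by $\hat\eta_5(t)-\eta_5\to0$, which an exponentially stable filter maps to a vanishing output. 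Combined with the fact that the denominator $\tfrac{\lambda}{p+\lambda}e^{b(y+\eta_5u)}$ is bounded away from zero, this yields $\hat\eta_2\to\eta_2$, and then $\hat K=\calm(\hat\eta)$ with the smoothness of $\calm$ on $\rea_+^5$ gives $\hat K(t)\to K$.

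The main obstacle is this last propagation step rather than the estimator analysis itself, which is essentially algebraic once \eqref{kre} is in hand. Two points need care: (a) the reconstruction maps $w$ and $\calm$ divide by quantities ($\tfrac{\lambda}{p+\lambda}e^{b(y+\hat\eta_5 u)}$ in $w$, and $\hat\eta_1,\hat\eta_4$ in $\calm$) that are guaranteed positive only for the \emph{true} parameters, so one must check that the estimates eventually enter the region where these denominators are bounded away from zero; and (b) $\eta_5$ appears inside the filters, precluding a static continuity argument and requiring the converging-input/vanishing-output estimate sketched above. Since the $\theta$-error decays monotonically, these arguments in fact deliver a full limit; the weaker conclusion $\liminf_{t\to\infty}|\hat K(t)-K|=0$ is the conservative statement that holds for all initial conditions without imposing further regularity on the reconstruction.
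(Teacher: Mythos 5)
Your proposal is correct and follows essentially the same route as the paper: verifying that the extension \eqref{eq:kreff} preserves the regression so that \eqref{kre} holds, scalarizing via $\adj(\Phi)\Phi=\Delta I$, writing the gradient error dynamics $\dot{\tilde\theta}_i=-\gamma_i\Delta^2\tilde\theta_i$ in closed form, and invoking the divergence of $\int_0^t\Delta^2(s)\,ds$ under the non-square-integrability hypothesis. The only difference is that your third step spells out the propagation through $w$ and $\calm$ (the converging-input/vanishing-output argument and the denominators bounded away from zero), which the paper leaves entirely implicit---its proof stops at $\tilde\theta_i\to0$, and that omission is precisely why the proposition conservatively claims only $\liminf_{t\to\infty}|\hat K(t)-K|=0$ rather than the full limit your completed argument would deliver.
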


\begin{rem}\label{rem5}
As discussed in  \cite{ARAetaltac,ORTetaltac21} the assumption that $\Delta$ is not square integrable is not a stringent one. As shown in the realistic simulations in the next section it is satisfied in our particular application.
\end{rem}

\section{Illustrative Example} \label{sec5}
As an example, we consider the 85 W Kyocera KC85TS module. We use the functional forms available in \cite{tian2012} to compute the “true” values of characterization parameters at various values of $G$ and $T$. The value of $C$ is generated artificially considering the case studies available in \cite{Dynamic2020,Dynamic2016}. The point of maximum power at standard test conditions (STC), with $G=1000$  W/m$^2$ and $T=25$ $^\mathrm{o}$C, is selected as the main operating point for the identification illustration. Table \ref{tab:params} summarizes the ``true" values of the parameters at STC. 

\begin{table}[tb]
	\caption{``True" PV array parameters for STC and three random operating points.}
	{\newcolumntype{C}{>{}X}
		\setlength{\tabcolsep}{0.3em} 
		\begin{tabularx}{\linewidth}{Cccccc}
			\hline
			\multirow{2}{*}{Description} & \multirow{2}{*}{Symbol} & \multicolumn{4}{c}{Value} \\
			& & STC & Mode  1 & Mode 2 & Mode 3 \\ \hline
			Solar irradiance, W/m$^2$ 			& $G$ 		& 1000 & 748.9 & 740.4 & 715.8 \\
			Junction temperature, K 			& $T$		& 298.15 & 302.15 & 302.40 & 302.71  \\			
			Average input current, A 			& $\bar{i}_\mathrm{pv}$ 	& 4.54 & 3.40 & 3.36 & 3.25 \\	
			Parasitic capacitance, $\mu$F 		& $C$ 		& 0.6 & 0.6 & 0.6 & 0.6\\
			Parallel resistance, $\Omega$ 		& $R_{p}$ 	& 112.55 & 150.28 & 152.02 & 157.23 \\
			Series resistance, $\Omega$ 		& $R_s$ 	& 0.2747 & 0.2747 &0.2747 &0.2747 \\
			Photoelectric current, A 			& $I_{irr}$	& 5.00 	& 3.75 	& 3.70 	& 3.58 \\
			Saturation current, nA 				& $I_{0}$ 	& 10.57 & 17.68 & 18.24 & 18.97 \\	
			Ideality factor, (no units) 		& $n$		& 1.1287 & 1.1287 & 1.1287 & 1.1287 \\
			Exponential coefficient, 1/V 		& $b$ 		& 0.958 & 0.945 & 0.944 & 0.943\\
			\hline
	\end{tabularx}}
	\label{tab:params}
\end{table}

The input signal $u$ in \eqref{sys}, which is $i_\mathrm{pv}$ in \eqref{xsys} is chosen to have an average value corresponding to the point of maximum power, with a 5\%, 20~kHz sinusoidal ripple,
\[
u(t) = \bar{i}_\mathrm{pv}\left(1 + 0.05 \sin( 2\pi\times20\times10^3 t)\right),
\]
where the value $\bar{i}_\mathrm{pv}$ corresponds to the (known) average value of $i_\mathrm{pv}$ as given in Table~\ref{tab:params}.

To implement the estimator we follow Proposition \ref{pro2}. Namely, we construct the signals $z$ and $\Omega$ as defined in \eqref{eq:parametrisation} with the signals $\psi$ and $\phi$ defined in \eqref{eq:psi} and \eqref{eq:phi}, respectively. The filter tuning coefficient is chosen as $\lambda = 6\cdot 10^5$. Then we apply the DREM procedure \eqref{eq:kreff}, \eqref{delcaly} and \eqref{eq:drem:grad}, where the tuning coefficients are selected as $a=10^5$, $c=10^3$, and $d=10^2$. The coefficients $\gamma_i$ are different for different simulation scenarios and are provided below. The estimate $\hat{\theta}$ is further used to construct---applying a certainty equivalence principle---the estimates $\hat \eta$ following \textbf{C2} and \textbf{C3} of Proposition~\ref{prop:parametrization}. Then, estimates of the physical parameters are constructed as indicated in Remark~\ref{rem2} via \eqref{phypar}. As the parameter reconstruction involves algebraic manipulation and division by time-varying signals that can be arbitrary close to zero in transients, the parameter estimates are bounded within the range $[0.1,\ 10]$ times the nominal value. Note that such saturation is used only for illustrative purposes; it is applied for the computation of the physical parameters given in Table~\ref{tab:params} and it is not applied to the estimates $\hat{\theta}$. Thus, the saturation does not affect the convergence analysis of Section \ref{sec4}.

We consider two simulation scenarios. First, the parameter estimator starts with {\em zero initial conditions} and estimates the constant parameters corresponding to the STC operating point. The coefficients $\gamma_i$ are $\gamma_1 = \gamma_2 = 20$ and $\gamma_3 = \gamma_4 = 40$. For this scenario, we present in Fig.~\ref{fig:params} that estimator transients for each of unknown parameters. Whereas the DREM procedure ensures the monotonicity of the transients for $\hat{\theta_i}$, the physical parameters estimates are obtained via algebraic manipulations and are thus prone to chattering. Notably, the estimate $\widehat{I}_0$ has the fastest and chattering-free convergence, the estimate $\widehat{R}_s$ is also chattering-free, whereas the estimate $\widehat{I}_{irr}$ exhibits the strongest chattering. Nevertheless, all transients converge exponentially in $15$~ms to their true values indicated with dotted lines in the figure.
\begin{figure}[tb]
	\centering
	\subfloat[Estimate $\widehat{C}$, $\mu$F]{%
		\includegraphics[width=0.32\linewidth]{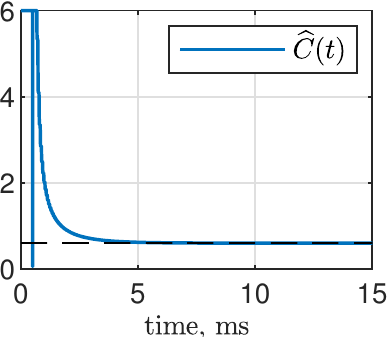}
	}
	~
	\subfloat[Estimate $\widehat{R}_\mathrm{p}$, $\Omega$]{%
		\includegraphics[width=0.32\linewidth]{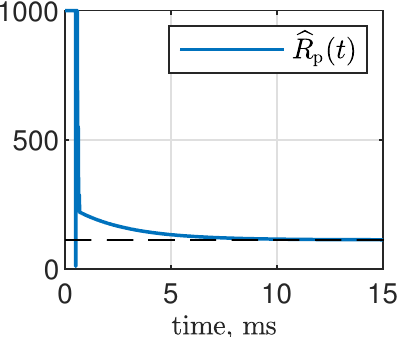}
	}
	~
	\subfloat[Estimate $\widehat{R}_s$, $\Omega$]{%
		\includegraphics[width=0.32\linewidth]{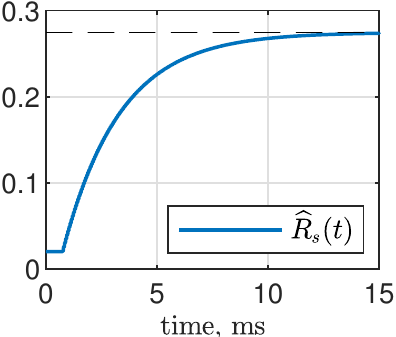}
	}
	\\
	\subfloat[Estimate $\widehat{I}_0$, nA]{%
		\includegraphics[width=0.32\linewidth]{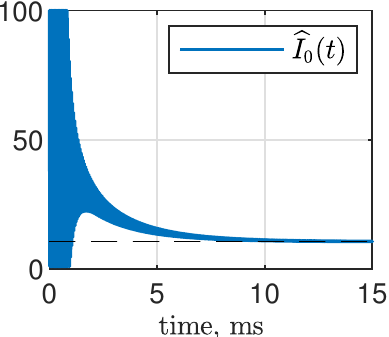}
	}
	~
	\subfloat[Estimate $\widehat{I}_{irr}$, A]{%
		\includegraphics[width=0.32\linewidth]{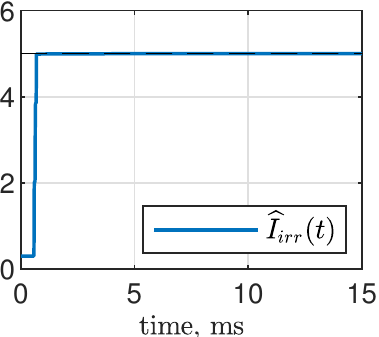}
	}
	\caption{Parameter estimation in the first scenario. The true values are given in dashed lines.}
	\label{fig:params}
\end{figure}

To illustrate the ability of the estimator to {\em track parameter variations}, we also include three additional random operating points on Table~\ref{tab:params} identified as Mode 1, Mode 2 and Mode 3, taken from real measurements of $G$ and $T$. We note the functional forms of \cite{tian2012} assume $R_s$ and $n$ to be constant, independent of $G$ and $T$ variations. Without loss of generality, we assume $C$ to be constant as well. 

So we define a second scenario, where the PV array starts in the Mode~3 with the correct estimates of the parameters. Then, the PV array  switches to Mode~1, then to Mode~2, and then back to Mode~3. The switches occur every~$20$~ms; such a short interval is not practical but is chosen for illustrative purposes. The coefficients $\gamma$ are $\gamma_1 = \gamma_2 = 200$ and $\gamma_3 = \gamma_4 = 400$. For this scenario, we present only the norm of the parameter estimation error $\tilde{\theta}$ to illustrate the overall performance and estimator's capability of tracking time-varying parameters, see Fig.~\ref{fig:tv}. After each switch, the estimation error norm jumps to a value corresponding to the parameter variation between the modes, and then decay exponentially in approximately $15$~ms.

\begin{figure}[tb]
	\centering
	\includegraphics[width=0.49\linewidth]{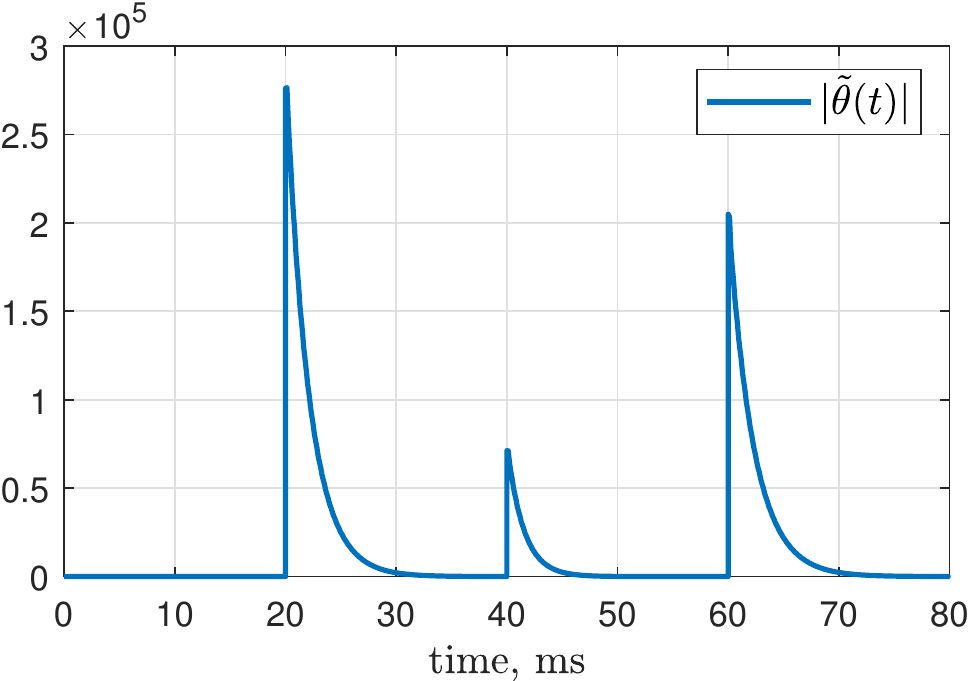}
	\caption{Parameter estimation error norm $|\tilde{\theta}|$ in the second scenario.}
	\label{fig:tv}
\end{figure}

\section{Conclusions} \label{sec6}
In this paper we have provided the first solution to the challenging problem of on-line estimation of the parameters of a new dynamical model describing accurately the behavior of  a PV array connected to a power system through a power converter. This problem concerns the parameter estimation of a nonlinear, underexcited system with unmeasurable state variables. Realistic numerical examples via computer simulations are presented to assess the performance of the proposed approach---even been able to track the parameter variations when the system changes operating point due to its on-line nature. 

We are currently working on the {\em practical} implementation of the proposed identification strategy on a physical PV array and we expect to be able to report our results in the near future.

\bibliographystyle{plain} 
\bibliography{cells_v13_ArXiV}

\end{document}